\newtheorem{theorem}{Theorem}[section]
\newtheorem{lemma}[theorem]{Lemma}
\newtheorem{observation}{Observation}
\newcommand\DELETE[1]{}
\begin{document}

\title{{\bf On chromatic number of colored mixed graphs}}
\author{
{\sc Sandip Das\footnote{{\small sandipdas@isical.ac.in}}}, {\sc Soumen Nandi\footnote{{\small soumen.nandi\_r@isical.ac.in}}}, {\sc Sagnik Sen\footnote{{\small sen007isi@gmail.com}}}\\
\mbox{}\\
{\small Indian Statistical Institute, Kolkata, India}\\
}


\date{\today}

\maketitle

\begin{abstract}
An $(m,n)$-colored mixed graph $G$ is a graph  with its arcs having one of the $m$ different colors and edges having one of the $n$ different colors. A homomorphism $f$ of an $(m,n)$-colored mixed graph $G$   to an  $(m,n)$-colored mixed graph $H$ is a vertex mapping such that if $uv$ is an arc (edge) of color $c$ in $G$, then $f(u)f(v)$ is an arc (edge) of  color $c$ in $H$. The \textit{$(m,n)$-colored mixed chromatic number} $\chi_{(m,n)}(G)$ of an   $(m,n)$-colored mixed graph $G$ is the 
order (number of vertices)
of the smallest homomorphic image of $G$. This notion was introduced by 
Ne\v{s}et\v{r}il and Raspaud (2000,  J. Combin. Theory, Ser. B 80, 147--155). 
They showed that $\chi_{(m,n)}(G) \leq k(2m+n)^{k-1}$ where $G$ is a $k$-acyclic colorable graph.  We proved the tightness of this bound. 
We also showed that the acyclic chromatic number of a graph is bounded by $k^2 + k^{2 + \lceil log_{(2m+n)} log_{(2m+n)} k \rceil}$ if its 
$(m,n)$-colored mixed chromatic number is at most $k$.  
Furthermore, using probabilistic method, we showed that for graphs with maximum degree $\Delta$ its $(m,n)$-colored mixed chromatic number is at most 
$2(\Delta-1)^{2m+n} (2m+n)^{\Delta-1}$. 
In particular, the last result directly improves the upper bound  $2\Delta^2 2^{\Delta}$ of oriented chromatic number of 
graphs with  maximum degree $\Delta$, obtained by Kostochka, Sopena and Zhu (1997, J. Graph Theory 24, 331--340)  to $2(\Delta-1)^2 2^{\Delta -1}$. 
We also show that there exists a graph with maximum degree $\Delta$ and $(m,n)$-colored mixed chromatic number at least $(2m+n)^{\Delta / 2}$. 
 \end{abstract}

\noindent \textbf{Keywords:} colored mixed graphs, acyclic chromatic number, graphs with bounded maximum degree, arboricity, chromatic number.

\section{Introduction}
An \textit{$(m,n)$-colored mixed graph} $G = (V, A \cup E)$ is a graph $G$ with set of vertices $V$, set of arcs $A$ and set of edges $E$ where each arc is 
 colored by one of the $m$ colors $\alpha_1, \alpha_2, ..., \alpha_m$ and each edge is  colored by one of the $n$ colors $\beta_1, \beta_2, ..., \beta_n$. We denote  the number of vertices and the number of edges of the underlying graph of $G$ by $v_{G}$  and $e_{G}$, respectively.
Also, we will consider only those 
$(m,n)$-colored mixed graphs for which the underlying undirected graph is simple.
Ne\v{s}et\v{r}il and Raspaud~\cite{raspaud_and_nesetril} generalized the notion of vertex coloring and chromatic number 
 for $(m,n)$-colored mixed graphs
 by definining colored homomorphism.

Let $G = (V_1, A_1 \cup E_1)$ and $H = (V_2, A_2 \cup E_2)$
be two $(m,n)$-colored mixed graphs. 
A colored homomorphism of $G$ to $H$ is a function $f : V_1 \rightarrow V_2$ satisfying
$$uv \in A_1 \Rightarrow f(u)f(v) \in A_2,$$ 
$$uv \in E_1 \Rightarrow f(u)f(v) \in E_2,$$
and the color of the arc or  edge linking $f(u)$ and $f(v)$ is the same as the color of the arc or the edge linking $u$ and $v$~\cite{raspaud_and_nesetril}.
  We write $ G \rightarrow  H$ whenever there exists a 
 homomorphism of $ G$ to $ H$.

Given an $(m,n)$-colored mixed graph $G$ let $H$ be an $(m,n)$-colored mixed graph with minimum \textit{order} (number of vertices) such that $G \rightarrow H$. 
Then the order of $H$ is the \textit{$(m,n)$-colored mixed chromatic number} $\chi_{(m,n)}(G)$ of $G$.
For an undirected simple graph $G$, the maximum $(m,n)$-colored mixed chromatic number taken over all $(m, n)$-colored mixed
graphs having underlying undirected simple graph $G$ is denoted by  $\chi_{(m,n)}(G)$.
Let $\mathcal{F}$ be a family of undirected simple graphs. 
Then  $\chi_{(m,n)}(\mathcal{F})$ is the maximum of $\chi_{(m,n)}(G)$ 
taken over all $G \in \mathcal{F}$.

Note that a $(0,1)$-colored mixed graph $G$  is nothing but an undirected simple graph while  
$\chi_{(0,1)}(G)$ is the ordinary chromatic number. 
Similarly, the study of $\chi_{(1,0)}(G)$ 
is  the study of oriented chromatic number which is considered by several researchers in the last two decades (for details please check the recent updated survey~\cite{sopena_updated_survey}).
Alon and Marshall~\cite{Marshall-edgecoloring}  studied the homomorphism of $(0,n)$-colored mixed graphs with a particular focus on $n=2$. 

A simple graph $G$ is \textit{$k$-acyclic colorable} if we can color its vertices with $k$ colors such that each color class induces an independent set and any two color class induces a forest. 
The \textit{acyclic chromatic number} $\chi_a(G)$ of a simple graph $G$ is the minimum  $k$ 
such that $G$ is $k$-acyclic colorable. 
Ne\v{s}et\v{r}il and Raspaud~\cite{raspaud_and_nesetril} showed that $\chi_{(m,n)}(G) \leq k(2m+n)^{k-1}$ where $G$ is a $k$-acyclic colorable graph. 
As planar graphs are $5$-acyclic colorable due to Borodin~\cite{Borodinacyclic}, the same authors implied 
 $\chi_{(m,n)}(\mathcal{P}) \leq 5(2m+n)^4$ for the family $\mathcal{P}$ of planar graphs as a corollary.
This result, in particular, implies $\chi_{(1,0)}(\mathcal{P}) \leq 80$ and $\chi_{(0,2)}(\mathcal{P}) \leq 80$ (independently proved 
before in~\cite{planar80} and~\cite{Marshall-edgecoloring}, respectively).

 Let $\mathcal{A}_k$ be the family of graphs with acyclic chromatic number at most $k$. Ochem~\cite{Ochem_negativeresults} showed that the upper bound  $\chi_{(1,0)}(\mathcal{A}_k) \leq 80$   is tight.  
We generalize it for all $(m,n) \neq (0,1)$ to show that the upper bound $\chi_{(m,n)}(\mathcal{A}_k) \leq k(2m+n)^{k-1}$
 obtained by Ne\v{s}et\v{r}il and Raspaud~\cite{raspaud_and_nesetril}
is tight. 
This implies that the upper bound $\chi_{(m,n)}(\mathcal{P}) \leq 5(2m+n)^4$ cannot be improved using
the upper bound of $\chi_{(m,n)}(\mathcal{A}_5)$.

The arboricity $arb(G)$ of a graph $G$ is the minimum $k$ such that the edges of $G$ can be decomposed into $k$ forests. 
Kostochka, Sopena and Zhu~\cite{Kostochka97acyclicand} showed that given a simple graph $G$, the acyclic chromatic number $\chi_a(G)$ of $G$ is also bounded by a function of  $\chi_{(1,0)}(G)$. 
We generalize this result for all $(m,n) \neq (0,1)$ by showing that for a graph $G$ with 
$\chi_{(m,n)}(G) \leq k$ we have 
$\chi_a(G) \leq k^2 + k^{2 + \lceil log_2 log_p k \rceil}$ where $p = 2m+n$. Our bound slightly improves the bound obtained by  Kostochka, Sopena and Zhu~\cite{Kostochka97acyclicand} for $(m,n) = (1,0)$.
For achieving this result we first establish some relations among arboricity of a graph, $(m,n)$-colored mixed chromatic number and 
acyclic chromatic number.

 Let $\mathcal{G}_{\Delta}$ be the family of graphs with maximum degree  $\Delta$.
Kostochka, Sopena and Zhu~\cite{Kostochka97acyclicand} proved that $2^{\Delta/2}  \chi_{(1,0)}(\mathcal{G}_{\Delta}) \leq 2\Delta^2 2^{\Delta}$.
We improve this result in a generalized setting by proving $p^{\Delta/2}   \leq  \chi_{(m,n)}(\mathcal{G}_{\Delta}) \leq 2(\Delta-1)^p p^{\Delta-1}$ for all  $(m,n) \neq (0,1)$ where 
$p = 2m+n$.

\section{Preliminaries}

A \textit{special 2-path} $uvw$ of an $(m,n)$-colored mixed graph $G$ is a 2-path satisfying one of the following conditions:

\begin{itemize}
\item[(i)] $uv$ and $vw$ are edges of different colors,

\item[(ii)] $uv$ and $vw$ are arcs (possibly of the same color),

\item[(iii)] $uv$ and $wv$ are arcs of different colors,

\item[(iv)] $vu$ and $vw$ are arcs of different colors,

\item[(v)] exactly one of $uv$ and $vw$ is an edge and the other is an arc.
\end{itemize}

\begin{observation}\label{special}
The endpoints of a special 2-path must have different image under any homomorphism of $G$.
\end{observation}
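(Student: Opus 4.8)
The plan is to argue by contradiction: fix an arbitrary homomorphism $f$ of $G$ to some target $(m,n)$-colored mixed graph $H$, assume that $f(u) = f(w)$ for a special 2-path $uvw$, and derive a contradiction separately in each of the five defining cases (i)--(v). Throughout I would rely on two facts. First, $f$ preserves both the colors of edges/arcs and the direction of arcs (by definition of a colored homomorphism). Second, the underlying graph of $H$ is simple, so any unordered pair of vertices of $H$ spans at most one object of the underlying graph; in mixed terms this rules out a pair being joined by two distinct arcs, or by two oppositely directed arcs, or simultaneously by an edge and an arc.

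First I would dispatch the ``color clash'' cases (i), (iii), and (iv). In case (i) the images $f(u)f(v)$ and $f(v)f(w)$ are edges whose colors equal those of $uv$ and $vw$; if $f(u)=f(w)$ these are one and the same edge of $H$ yet are required to carry two distinct colors, which is impossible. Cases (iii) and (iv) are identical in spirit: the two arcs $uv,wv$ (respectively $vu,vw$) of different colors map to $f(u)f(v),f(w)f(v)$ (respectively $f(v)f(u),f(v)f(w)$), and under $f(u)=f(w)$ these coincide as a single arc of $H$ forced to bear two different colors, again impossible.

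The remaining cases (ii) and (v) are where I expect the only mild subtlety, since there the contradiction comes from simplicity rather than from a color clash, and it is worth stating precisely what simplicity forbids in the mixed setting. In case (ii), $f(u)f(v)$ and $f(v)f(w)$ are arcs; setting $f(u)=f(w)$ produces an arc from $f(u)$ to $f(v)$ together with an arc from $f(v)$ to $f(u)$, i.e.\ two oppositely directed arcs on the same pair, contradicting simplicity of the underlying graph. In case (v), exactly one image is an edge and the other an arc, both incident to the pair $\{f(u),f(v)\}$ once $f(u)=f(w)$, so this pair is joined simultaneously by an edge and an arc, once more contradicting simplicity. As all five cases yield contradictions, we conclude $f(u)\neq f(w)$, which proves the observation.
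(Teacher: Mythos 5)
Your proof is correct and rests on exactly the same idea as the paper's: if $f(u)=f(w)$, the pair $\{f(u),f(v)\}$ would have to carry two distinct edge/arc objects, i.e.\ parallel edges in the underlying graph of $H$, contradicting simplicity. The paper states this uniformly in one sentence; you merely spell out the five cases explicitly, which changes nothing of substance.
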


\begin{proof}
Let $uvw$ be a special 2-path in an $(m,n)$-colored mixed graph $G$. Let $f: G \rightarrow H$ be a colored homomorphism of $G$ to an
$(m,n)$-colored mixed graph $H$ such that $f(u) = f(w)$. Then $f(u)f(v)$ and $f(w)f(v)$  will induce parallel edges in the underlying graph of $H$.
But as we are dealing with $(m,n)$-colored mixed graphs with underlying simple graphs, this is not possible. 
\end{proof}

Let $G = (V, A \cup E)$ be an $(m,n)$-colored mixed graph. Let $uv$ be an arc of $G$ with color $\alpha_i$ for some $i \in \{1,2,...,m\}$. 
Then $u$ is a
\textit{$-\alpha_i$-neighbor} of $v$ and $v$ is a \textit{$+\alpha_i$-neighbor} of $u$. The set of all $+\alpha_i$-neighbors 
and $-\alpha_i$-neighbors of $v$ is denoted by $N^{+\alpha_i}(v)$ and $N^{-\alpha_i}(v)$, respectively. Similarly, 
let $uv$ be an edge of $G$ with color $\beta_i$ for some $i \in \{1,2,...,n\}$. Then $u$ is a
\textit{$\beta_i$-neighbor} of $v$ and the set of all $\beta_i$-neighbors of $v$ is denoted by $N^{\beta_i}(v)$.
Let $\vec{a} = (a_1, a_2, ..., a_j)$ be a \textit{$j$-vector} such that 
$a_i \in \{\pm \alpha_1, \pm \alpha_2, ...,  \pm \alpha_m,  \beta_1, \beta_2, ..., \beta_n\}$ where $i \in \{1,2,...,j\}$. 
Let $J = (v_1, v_2, ..., v_j)$ be a \textit{$j$-tuple} (without repetition) of  vertices  from $G$. Then we define the set 
$N^{\vec{a}}(J) = \{v \in V | v \in N^{a_i}(v_i) \text{ \textit{for all} } 1 \leq i \leq j\} $. 
Finally, we say that $G$ has property $Q^{t,j}_{g(j)}$ if for each $j$-vector $\vec{a}$ and each $j$-tuple $J$ we have 
$|N^{\vec{a}}(J)| \geq g(j)$ where $j \in \{0,1,...,t\}$ and $g: \{0,1,...,t\} \rightarrow \{0,1,... \infty\}$ is an integral function.

  \section{On graphs with bounded acyclic chromatic number}\label{sec acyclic}
  First we will construct examples of $(m,n)$-colored mixed graphs $H_k^{(m,n)}$ with acyclic chromatic number at most $k$ and 
  $\chi_{(m,n)}(H_k^{(m,n)}) = k(2m+n)^{k-1}$ for all $k \geq 3$ and for all $(m,n) \neq (0,1)$. This, along with the upper bound established by Ne\v{s}et\v{r}il and Raspaud~\cite{raspaud_and_nesetril}, will imply the following result:

  \begin{theorem}\label{acyclic-chromatic}
  Let $\mathcal{A}_k$ be the family of graphs with acyclic chromatic number at most $k$. 
  Then $\chi_{(m,n)}(\mathcal{A}_k) = k(2m+n)^{k-1}$ for all $k \geq 3$ and for all $(m,n) \neq (0,1)$. 
  \end{theorem}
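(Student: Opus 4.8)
The inequality $\chi_{(m,n)}(\mathcal{A}_k)\le k(2m+n)^{k-1}$ is already available from Ne\v{s}et\v{r}il and Raspaud~\cite{raspaud_and_nesetril}, so the entire task is the matching lower bound. Writing $p=2m+n$ and letting $S=\{+\alpha_1,\dots,+\alpha_m,-\alpha_1,\dots,-\alpha_m,\beta_1,\dots,\beta_n\}$ be the set of the $p$ colored adjacency \emph{types}, the plan is to build, for each fixed $k\ge 3$ and $(m,n)\ne (0,1)$, a single graph $H_k^{(m,n)}$ that is $k$-acyclic colorable (hence lies in $\mathcal{A}_k$ and satisfies $\chi_{(m,n)}\le kp^{k-1}$ by the upper bound) and that simultaneously forces $\chi_{(m,n)}(H_k^{(m,n)})\ge kp^{k-1}$. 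Combining the two bounds and using $H_k^{(m,n)}\in\mathcal{A}_k$ then yields the theorem.

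For the construction I would reverse-engineer the Ne\v{s}et\v{r}il--Raspaud homomorphism: there the image of a vertex is a pair $(i,\pi)$ with $i\in\{1,\dots,k\}$ a color and $\pi\colon\{1,\dots,k\}\setminus\{i\}\to S$ a vector of types, and there are exactly $kp^{k-1}$ such pairs. The idea is to realise all of these $kp^{k-1}$ \emph{signatures} explicitly as vertices of $H_k^{(m,n)}$ and to force any homomorphism to keep them pairwise distinct. Concretely I would take $k$ parts $C_1,\dots,C_k$, where $C_i$ contains one \emph{signature vertex} $x_{i,\pi}$ for each $\pi\in S^{\{1,\dots,k\}\setminus\{i\}}$, together with a small bank of \emph{reference vertices}; a signature vertex $x_{i,\pi}$ is joined to the reference vertices by arcs/edges whose colors and directions encode the coordinates of $\pi$, so that $\pi$ can be read off from the colored neighbourhood of $x_{i,\pi}$. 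The parts $C_1,\dots,C_k$ are designed to be exactly the $k$ color classes of the intended acyclic coloring, so in particular no arc or edge is placed inside a single part.

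The lower bound then follows from Observation~\ref{special}. Given two distinct signature vertices, I would produce a special $2$-path between them, which forces their images to differ under every homomorphism; since there are $kp^{k-1}$ signature vertices, any homomorphic image has at least $kp^{k-1}$ vertices. Two signature vertices $x_{i,\pi}$ and $x_{i,\pi'}$ in the same part differ in some coordinate $j$, i.e. $\pi(j)\ne\pi'(j)$, so they are attached to a common reference vertex with two \emph{different} types, which is precisely a special $2$-path (cases (i)--(v)); here it is essential that $p\ge 2$, which is exactly why $(m,n)=(0,1)$ is excluded. For two signature vertices lying in different parts $C_i,C_{i'}$ I would add a separate bank of class-identifying reference vertices carrying a type-encoding of the part index, so that $x_{i,\pi}$ and $x_{i',\pi'}$ attach to a common such reference with different types; since $k\ge 3$ these references can be placed in a part distinct from both $C_i$ and $C_{i'}$, and $\lceil\log_p k\rceil$ of them suffice to separate the $k$ indices.

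The main obstacle is condition establishing $\chi_a(H_k^{(m,n)})\le k$, i.e. that $C_1,\dots,C_k$ form an acyclic coloring. Independence of each class is automatic because no adjacency is placed inside a part, so everything reduces to checking that the subgraph induced by every pair of parts $C_i\cup C_j$ is a forest. This is where the construction must be engineered carefully: the reference-to-signature attachments carry a large amount of colored adjacency (they must, in order to realise all $p^{k-1}$ signatures per part), and the danger is that two parts induce a bichromatic cycle. I would therefore arrange each reference vertex to meet the signature vertices of any other part in a forest pattern (stars or matchings along a fixed tree backbone) and verify pair by pair that no cycle using only two parts can arise. Once this forest condition is established, both required properties hold simultaneously and the theorem follows.
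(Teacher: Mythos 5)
Your overall strategy is the same as the paper's: take the upper bound from Ne\v{s}et\v{r}il--Raspaud, and for the lower bound build one graph per $k$ that realizes all $kp^{k-1}$ signatures as explicit vertices, forces them to have pairwise distinct images via special $2$-paths (Observation~\ref{special}), and admits a $k$-acyclic coloring whose classes are the $k$ parts. Your within-part mechanism is exactly the paper's: each part $C_i$ carries $k-1$ shared reference vertices (the paper's ``bottom'' vertices $B_i$), each signature vertex reads off one coordinate of its vector on each reference, and two signatures differing in coordinate $j$ form a special $2$-path through the $j$-th reference. (One detail you leave implicit but that is needed and works: the $k-1$ references of $C_i$ must receive the $k-1$ distinct colors of $\{1,\dots,k\}\setminus\{i\}$, so that the potential $4$-cycles through two references of the same part are trichromatic.)

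The genuine gap is in your cross-part separation. You propose a bank of $\lceil\log_p k\rceil$ class-identifying reference vertices, each adjacent to the signature vertices of many parts with types encoding the part index. Any such reference placed in part $C_l$ and adjacent to all of $C_j$ forms a star over $p^{k-1}$ leaves of color $j$; as soon as two references of the bank end up in the same part $C_l$ and are both adjacent to all of $C_j$, you get a bichromatic $4$-cycle in the subgraph induced by colors $l$ and $j$, destroying acyclicity. This already bites at $k=3$, $p=2$, where you need two digits and the pair $\{C_1,C_2\}$ forces both digit-references into $C_3$. Your proposed remedies do not repair this: a matching or tree backbone no longer gives every cross-part pair a \emph{common} reference neighbour, which is what the special-$2$-path argument requires. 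The paper's fix is different and decisive: for each cross-part pair $u\in T_i$, $v\in T_j$ it introduces a \emph{private} internal vertex $w_{uv}$ of degree $2$ forming a special $2$-path $uw_{uv}v$; since $k\ge 3$, each $w_{uv}$ can be colored with a color different from both $i$ and $j$, so it disappears from every two-color subgraph containing its neighbours and contributes no cycles. Until you replace your shared class-identifying bank with something of this kind (or otherwise prove the pairwise-forest condition), the construction does not establish $\chi_a(H_k^{(m,n)})\le k$, and the lower bound does not go through.
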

  
\begin{proof}
First we will construct an $(m,n)$-colored mixed graph $H_k^{(m,n)}$, where $p = 2m+n \geq 2$, as follows. 
Let $A_{k-1}$ be the set of all $(k-1)$-vectors. 
Thus, $|A_{k-1}| = p^{k-1}$.  

Define $B_i$ as a set of $(k-1)$ vertices $B_i = \{b^i_1, b^i_2, ...,b^i_{k-1}\}$ for all $i \in \{1, 2, ..., k\}$ such that $B_r \cap B_s = \emptyset$ when $r \neq s$. The vertices of $B_i$'s are called \textit{bottom} vertices for each $i \in \{1, 2, ..., k\}$.
Furthermore, let $TB_i = (b^i_1, b^i_2, ...,b^i_{k-1})$ be a $(k-1)$-tuple.  

After that define the set of  vertices $T_i = \{t^i_{\vec{a}}| t^i_{\vec{a}} \in N^{\vec{a}}(TB_i) \text{ \textit{for all} } \vec{a} \in A_{k-1}\}$ 
for all $i \in \{1, 2, ..., k\}$.  
The vertices of $T_i$'s are called \textit{top} vertices for each $i \in \{1, 2, ..., k\}$. 
Observe that there are $p^{k-1}$ vertices in $T_i$ for each $i  \in \{1, 2, ..., k\}$.

Note that the definition of $T_i$ already implies some colored arcs and edges between the set of vertices 
$B_i$ and $T_i$ for all   $i \in \{1, 2, ..., k\}$.

As $p \geq 2$ it is possible to construct a special 2-path. 
Now for each  pair of vertices $u \in T_i$ and $v \in T_j$ ($i \neq j$), construct a special 2-path $uw_{uv}v$ and  call these new vertices $w_{uv}$ as 
\textit{internal} vertices for all $i, j \in \{1, 2, ..., k\}$.
This so obtained graph is $H_k^{(m,n)}$.

Now we will show that $\chi_{(m,n)}(H_k^{(m,n)}) \geq k(2m+n)^{k-1}$. 
Let $\vec{a} \neq \vec{a'}$ be two distinct 
$(k-1)$-vectors. 
Assume that the $j^th$ co-ordinate of $\vec{a}$ and  $\vec{a'}$ is different. 
Then note that $t^i_{\vec{a}}b^i_jt^i_{\vec{a'}}$ is a special 2-path. 
Therefore, $t^i_{\vec{a}}$ and $t^i_{\vec{a}}$ must have different homomorphic image under any homomorphism. 
Thus, all the vertices in $T_i$ must have distinct homomorphic image under any homomorphism. 
Moreover, as a vertex of $T_i$ is connected by a special 2-path with a vertex of $T_j$ for all $i \neq j$, all the top vertices must have distinct 
homomorphic image under any homomorphism. It is easy to see that $|T_i| =  p^{k-1}$ for all   $i \in \{1, 2, ..., k\}$.
Hence $\chi_{(m,n)}(H_k^{(m,n)}) \geq \sum_{i=1}^k |T_i| = k(2m+n)^{k-1}$.

 Then we will show that $\chi_{a}(H_k^{(m,n)}) \leq k$. From now on, by $H_k^{(m,n)}$, we mean the underlying undirected simple graph of the 
 $(m,n)$-colored mixed graph $H_k^{(m,n)}$. We will provide an acyclic coloring of this graph with $\{1,2,...,k\}$.
 Color all the vertices of $T_i$ with $i$ for all $i \in \{1,2,...,k\}$. Then color all the vertices of $B_i$ with distinct $(k-1)$ colors 
 from the set $\{1,2,...,k\} \setminus \{i\}$ of colors for all $i \in \{1,2,...,k\}$.
Note that each internal vertex have exactly two neighbors. Color each internal vertex with a color different from its neighbors.
It is easy to check that this is an acyclic coloring. 

Therefore, we showed that $\chi_{(m,n)}(\mathcal{A}_k) \geq k(2m+n)^{k-1}$ 
while, on the other hand, Ne\v{s}et\v{r}il and Raspaud~\cite{raspaud_and_nesetril} showed that $\chi_{(m,n)}(\mathcal{A}_k) \leq k(2m+n)^{k-1}$ for all $k \geq 3$ and for all $(m,n) \neq (0,1)$. 
  \end{proof}

Consider a complete graph  $K_t$. Replace all its edges  by a 2-path to obtain the graph $S$. 
For all $(m,n) \neq (0,1)$, it is possible to assign colored edges/arcs to 
the edges of $S$ such that it becomes an $(m,n)$-colored mixed graph with $t$ vertices that are pairwise connected by 
a special 2-path. Therefore, by Observation~\ref{special} we know that $\chi_{(m,n)}(S) \geq t$ whereas, it is easy to note that $S$ has arboricity 2. 
Thus, the $(m,n)$-colored mixed chromatic number  is not bounded by any function of arboricity.
Though the reverse type of bound exists. Kostochka, Sopena and Zhu~\cite{Kostochka97acyclicand}  proved such a bound for $(m,n) =(1,0)$. We generalize their result for all $(m,n) \neq (0,1)$.

\begin{theorem}\label{chromatic-arboricity}
Let $G$ be an $(m,n)$-colored mixed graph with $\chi_{(m,n)}(G) = k$ where $p = 2m+n \geq 2$. Then $arb(G) \leq \lceil log_{p}k +k/2 \rceil $. 
\end{theorem}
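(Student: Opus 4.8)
The plan is to bound the arboricity through edge density and the classical Nash--Williams formula $arb(G)=\max_{H}\lceil e_H/(v_H-1)\rceil$, the maximum being over all subgraphs $H$ with at least two vertices. Since colored homomorphisms are hereditary, every subgraph $H$ of $G$ also satisfies $\chi_{(m,n)}(H)\le k$; hence it suffices to show that each such $H$ obeys $e_H\le (v_H-1)\lceil \log_p k + k/2 \rceil$, i.e.\ that $H$ cannot be too dense. I would argue the contrapositive: if some subgraph were denser than the claimed bound allows, I would manufacture a coloring of $G$ whose homomorphic image is forced to have more than $k$ vertices, contradicting $\chi_{(m,n)}(G)=k$.

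The engine for forcing a large image is a probabilistic separation lemma built on Observation~\ref{special}. Label each edge of $H$ independently and uniformly by one of the $p=2m+n$ possible arc/edge types. A direct case check shows that a $2$-path $uvw$ fails to be special exactly when $u$ and $w$ are neighbors of $v$ of the same type, an event of probability $1/p$; thus $uvw$ is special with probability $1-1/p$, and two vertices sharing $c$ common neighbors receive distinct images with probability at least $1-p^{-c}$, the events for distinct common neighbors being independent. Two consequences drive the proof. First, adjacent vertices always get distinct images. Second (a ``signature'' version), if $s$ vertices share a common set of $r\ge \lceil \log_p s\rceil$ neighbors, then assigning those $s$ vertices pairwise-distinct vectors in $\{1,\dots,p\}^{r}$, recording at each of the $r$ hubs the type of the connecting edge, separates every pair through some hub; hence such a $K_{r,s}$ forces an image of order at least $s$.

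From here the two summands of the bound arise from two regimes. The near-complete-bipartite subgraph is the extremal case: a copy of $K_{n,n}$ has $arb\approx n/2$, and the signature lemma applied with the $r\approx\log_p n$ hubs on one side forces an image of order $\approx n$, so $arb\lesssim k/2$ is exactly what the $k/2$ term records. For the general case I would take the subgraph $H$ witnessing the Nash--Williams maximum, so that $H$ has minimum degree at least $arb(G)$ (else deleting a low-degree vertex would raise the density ratio), and then, using its density, locate inside it more than $k$ vertices that are pairwise separated---some pairs by adjacency, the rest by shared hubs---calibrating the number of required hubs by $\log_p$ and the adjacency contribution by $k/2$, so that whenever $arb(G)>\lceil \log_p k + k/2 \rceil$ the union bound $\binom{v_H}{2}p^{-c}<1$, or its signature refinement, yields a coloring with image larger than $k$.

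The main obstacle is precisely this last step: converting raw edge density into the combinatorial structure the separation lemma needs. High density alone does not guarantee pairs with many common neighbors---a high-girth dense subgraph has almost none---so a purely pairwise argument can fail; in such locally sparse but globally dense graphs one must instead argue that the image is large for global reasons (no small target can absorb the many internally non-separated independent sets), or observe that these graphs already force $\chi_{(m,n)}$ large enough that the $k/2$ term dominates. Making the adjacency contribution and the hub-depth contribution add up to exactly $k/2+\log_p k$, rather than $2\log_p k$ or $k$, is the delicate bookkeeping, and pushing the constant in front of $k$ down to $1/2$ is where the extremal complete-bipartite analysis must be matched tightly against the Nash--Williams density.
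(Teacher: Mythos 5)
Your proposal takes a genuinely different route from the paper, and it contains a gap that you yourself identify but do not close. The paper's argument is not a probabilistic separation argument at all: it is a counting (pigeonhole) argument over \emph{all} colorings of a labeled subgraph. Fix a subgraph $G'$ of $G$ realizing the Nash--Williams maximum. There are $p^{e_{G'}}$ ways to turn the labeled graph $G'$ into an $(m,n)$-colored mixed graph, and each of them admits a homomorphism to some $(m,n)$-colored mixed graph on $k$ vertices whose underlying graph is $K_k$. Since the colored target is determined by one of at most $p^{\binom{k}{2}}$ colorings of $K_k$, and the homomorphism by one of at most $k^{v_{G'}}$ vertex maps, and since the target coloring together with the vertex map determines the coloring of $G'$, one gets $p^{e_{G'}} \le p^{\binom{k}{2}} k^{v_{G'}}$, i.e.\ $\log_p k \ge e_{G'}/v_{G'} - \binom{k}{2}/v_{G'}$. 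A short case split ($v_{G'}\le k$ gives $e_{G'}/(v_{G'}-1)\le v_{G'}/2\le k/2$ directly; $v_{G'}>k$ gives the bound by elementary manipulation) finishes the proof. No special structure inside $G'$ is ever needed.

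Your approach, by contrast, tries to exhibit a \emph{single} coloring of a dense subgraph that forces a large image, using special $2$-paths through common neighbors. The obstacle you flag in your last paragraph is fatal as stated: a subgraph of density roughly $\log_p k + k/2$ can have high girth, hence essentially no pairs with two or more common neighbors, so the $K_{r,s}$ ``signature'' lemma has nothing to bite on, and the union bound $\binom{v_H}{2}p^{-c}<1$ cannot be run. Your parenthetical remark that ``no small target can absorb'' the colorings of such graphs is exactly the right instinct, but it is the counting inequality $p^{e_{G'}} \le p^{\binom{k}{2}} k^{v_{G'}}$ made vague; until you write that inequality down, the proposal does not prove the theorem, and the precise constants $\log_p k$ and $k/2$ (which in the paper fall out mechanically from $\binom{k}{2}/v_{G'}$ and the small-subgraph case) are not recovered.
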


\begin{proof}
Let $G'$ be an arbitrary labeled subgraph of $G$ consisting $v_{G'}$ vertices and $e_{G'}$ edges. We know from 
Nash-Williams' Theorem~\cite{nash1page}  that the arboricity $arb(G)$ of any graph $G$ is equal to the maximum of 
$\lceil e_{G'}/(v_{G'} - 1) \rceil$ over all subgraphs $G'$ of $G$. So it is sufficient to prove that for any subgraph $G'$ of $G$, 
$e_{G'}/(v_{G'} - 1) \leq log_p k + k/2$. 
As $G'$ is a labeled graph, so there are $p^{e_{G'}}$ different $(m,n)$-colored mixed graphs with underlying graph $G'$. As $\chi_{(m,n)}(G) = k$, there exits a homomorphism from $G'$ to a $(m, n)$-colored mixed graph $G_k$ which has the complete graph on $k$ vertices as its underlying graph. 
Note that the number of possible homomorphisms of $G'$ to $G_k$ is at most $k^{v_{G'}}$.
For each such homomorphism of  $G'$ to $G_k$ there are at most $p^{k \choose 2}$ different $(m,n)$-colored mixed graphs with underlying labeled graph  $G'$ as there are $p^{k \choose 2}$ choices of $G_k$. 
Therefore,
\begin{equation}\label{eqn wolog}
p^{k \choose 2}. k^{v_{G'}} \geq p^{e_{G'}}
\end{equation}
  which implies 
  \begin{equation}\label{eqn wlog}
   log_p k \geq (e_{G'}/v_{G'}) - {k \choose 2} / v_{G'}.
\end{equation}  

If $v_{G'} \leq k$, then $e_{G'}/(v_{G'} - 1) \leq v_{G'}/2 \leq k/2$. Now let $v_{G'} > k$. We know that $\chi_{(m,n)}(G') \leq \chi_{(m,n)}(G) = k$. So

\begin{equation*} \label{.}
\begin{split}
log_p k & \geq \frac{e_{G'}}{v_{G'}} - \frac{k(k - 1)}{2 v_{G'}} \\
 & \geq \frac{e_{G'}}{(v_{G'} -1)} - \frac{e_{G'}}{v_{G'}(v_{G'} - 1)} - \frac{k - 1}{2} \\
 & \geq \frac{e_{G'}}{(v_{G'} -1)} - 1/2 - k/2 + 1/2 \\
 & \geq \frac{e_{G'}}{(v_{G'} -1)} - k/2.
\end{split}
\end{equation*}

Therefore, $\frac{e_{G'}}{(v_{G'} -1)} \leq log_{p}k +k/2$. 
\end{proof}

We have seen that the $(m,n)$-colored mixed chromatic number of a graph $G$ is bounded by a function of the acyclic chromatic number of $G$. Here we show that
 it is possible to bound the acyclic chromatic number of a graph in terms of its $(m,n)$-colored mixed chromatic number and arboricity.
Our result is a generalization of a similar result proved for $(m,n) =(1,0)$ by  Kostochka, Sopena and Zhu~\cite{Kostochka97acyclicand}.

\begin{theorem}\label{arboricity.chromatic-acyclic}
Let $G$ be an $(m,n)$-colored mixed graph with $arb(G) = r$ and $\chi_{(m,n)}(G) = k$ where $p = 2m+n \geq 2$.  Then 
$\chi_a(G) \leq k^{\lceil log_pr \rceil +1}$.
\end{theorem}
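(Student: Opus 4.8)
The plan is to induct on $t=\lceil \log_p r\rceil$, constructing throughout a vertex coloring that \emph{refines} the homomorphism coloring $\phi_0\colon V(G)\to V(H)$ coming from a homomorphism $G\to H$ with $|V(H)|=k$; this refinement invariant is what will let each inductive step cost only a single extra factor of $k$. For the base case $t=0$ (so $r\le 1$ and $G$ is a forest) the coloring $\phi_0$ already works: a forest has no cycles at all, so any proper coloring --- and $\phi_0$ is proper, since two vertices with a common image cannot be adjacent in a simple underlying graph --- is vacuously acyclic, giving $\chi_a(G)\le k=k^{0+1}$. For the inductive step I would first fix a decomposition of $E(G)$ into $r$ forests (Nash--Williams, as already invoked in the proof of Theorem~\ref{chromatic-arboricity}) and bundle them into $p$ groups of at most $\lceil r/p\rceil$ forests each, so that $G$ carries a natural edge-partition into $p$ subgraphs of arboricity at most $\lceil r/p\rceil$.

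Two structural facts drive the argument. First, by Observation~\ref{special} together with the simplicity of the underlying graph, between any two classes $V_c=\phi_0^{-1}(c)$ and $V_d=\phi_0^{-1}(d)$ all edges of $G$ realise the \emph{same} type (the unique arc/edge joining $c$ and $d$ in $H$); consequently every bichromatic cycle of any coloring refining $\phi_0$ lives inside a single such single-type bipartite graph $B_{cd}$, and refining a coloring can only destroy, never create, such cycles. Second, I would use the standard ``source'' criterion: fixing an acyclic reference orientation of $G$ (available since the arboricity $r$ bounds the degeneracy, hence yields an acyclic orientation of bounded out-degree), a $\phi_0$-refining coloring is acyclic as soon as, at every vertex, its out-neighbours receive pairwise distinct colors --- because an acyclically oriented bichromatic cycle must contain a source whose two cycle-neighbours lie on the opposite side of $B_{cd}$ and hence share a color. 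Since out-neighbours with distinct images are automatically separated by $\phi_0$, the only work is to separate \emph{same-image} out-neighbours (equivalently, for each vertex, its same-image co-parents).

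The heart of the proof, and the step I expect to be the main obstacle, is a reduction lemma: using the $p$-bundle partition one adjoins a single new coordinate $g\colon V(G)\to[k]$ so that the set of edges that can still lie on a bichromatic cycle of $(\phi_0,g)$ forms a subgraph $G'$ with $arb(G')\le \lceil r/p\rceil$; one then applies the induction hypothesis to $G'$ (which still satisfies $\chi_{(m,n)}(G')\le k$ and inherits the homomorphism $\phi_0$), appends its $k^{\lceil \log_p\lceil r/p\rceil\rceil+1}=k^{t}$ colors to the coordinate $g$, and, because every already-safe edge stays safe under refinement, obtains an acyclic coloring of $G$ with $k\cdot k^{t}=k^{t+1}$ colors. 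Making this single $[k]$-coordinate absorb a whole factor $p$ is exactly where the type structure of the $p$ bundles must be played against the homomorphism: a naive product of the bundles' colorings is \emph{not} acyclic on the union (two equal-image co-parents lying in different bundles are compared by no single bundle), so the reduction must genuinely isolate the surviving conflicts into one small-arboricity residual rather than treat the bundles independently. The remaining bookkeeping --- verifying that the ceilings behave, i.e.\ $\lceil\log_p\lceil r/p\rceil\rceil=t-1$, and checking that the degeneracy-versus-out-degree constant does not affect the claimed exponent --- is routine.
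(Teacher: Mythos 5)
Your plan has the right bound and the right guiding idea (each additional $k$-ary coordinate should absorb a factor of $p$ by exploiting the $p$ adjacency types), but the step you yourself flag as ``the main obstacle'' --- the reduction lemma constructing the coordinate $g$ --- is exactly the content of the theorem, and you never construct it. The missing idea is the paper's re-typing trick: write each forest index in base $p$, and for each digit position $l$ build an auxiliary $(m,n)$-colored mixed graph $G_l$ on the same underlying graph $G$ in which the edge $v_iv_j$ ($i<j$, lying in forest $F_{l'}$) is given the adjacency type indexed by the $l$-th digit of $l'$; since $\chi_{(m,n)}(G)=k$ bounds every colored mixed graph on the underlying graph $G$, each $G_l$ admits a homomorphism $f_l$ to a $k$-vertex target, and the product $(f_0,f_1,\dots,f_s)$ with $s=\lceil\log_p r\rceil$ is the desired coloring. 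Note in particular that this is not a recursion on a residual subgraph: after adding one coordinate, the unresolved conflicts are, at each vertex, pairs of same-$\phi_0$-image co-parents whose edges lie in the same bundle, and these pairs range over all $p$ bundles simultaneously; they do not assemble into a single subgraph $G'$ of arboricity $\lceil r/p\rceil$ to which an induction hypothesis about $\chi_a$ can be applied. What makes the unrolled version work is that the $j$-th digit re-typing of \emph{all} of $G$ is a single colored mixed graph, so one coordinate handles the $j$-th digit for every bundle at once.

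A second concrete problem is your ``source criterion.'' It demands that \emph{all} out-neighbours of every vertex receive distinct colors, but the digit construction can only separate two co-parents $u,w$ of $v$ when $uv$ and $vw$ lie in \emph{different} forests. These are compatible only if the reference orientation has at most one out-edge per forest at each vertex \emph{and} is acyclic; a degeneracy ordering (which is what you invoke) gives neither guarantee, and orienting each forest toward its roots gives the first but can create directed cycles, destroying the existence of a source. The paper avoids this entirely: given a cycle $C$ that is $2$-colored, it picks two incident edges $uv,vw$ of $C$ lying in different forests (these exist because no forest contains a cycle), uses the coordinate $f_0$ --- a homomorphism of the version of $G$ in which every edge is typed $a_0$ and oriented by the vertex order --- to force $u$ and $w$ into the same directed/colored relation to $v$, and then lets the digit coordinate where the two forest indices differ separate $f(u)$ from $f(w)$. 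You need both of these ingredients (the explicit re-typed homomorphisms and the ``two incident edges in different forests'' argument in place of the source criterion) before the inductive bookkeeping you describe as routine can even begin.
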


\begin{proof}
First we rename the following symbols:  
$\alpha_{1} = a_0, -\alpha_{1} = a_{1}, \alpha_2 = a_2, -\alpha_2 = a_3, ..., \alpha_m = a_{2m-2}, -\alpha_m = a_{2m-1}, \beta_1 = a_{2m}, \beta_2 = a_{2m+1},..., \beta_n = a_{2m+n-1}$.

Let $G$ be a graph with $\chi_{(m,n)}(G) = k$ where $2m+n = p$.
Let $v_1, v_2, ..., v_t$ be some ordering of the vertices of $G$. 
Now consider the $(m,n)$-colored mixed graph $G_0$ with underlying graph $G$ such that for any $i < j$ we have 
$v_j \in N^{a_0}(v_i)$ 
whenever $v_iv_j$ is 
an edge of $G$. 

Note that the edges of  $G$ can be covered by $r$ edge disjoint forests $F_1, F_2, ..., F_r$ as $arb(G) = r$. 
Let $s_i$ be the number $i$ expressed with base $p$
for all $i \in \{1,2,...,r\}$. Note that $s_i$ can have at most $s = \lceil log_pr \rceil$ digits.

  Now we will construct a sequence of $(m,n)$-colored mixed graphs $G_1, G_2, ..., G_s$ each having underlying graph $G$. 
  For a fixed $l \in \{1,2,...,s\}$ we will describe the construction of $G_l$.   
  Let $i <j$ and $v_iv_j$ is an edge of $G$. 
  Suppose $v_iv_j$ is an edge of the forest $F_{l'}$ for some $l' \in \{1,2,...,r\}$.
  Let the $l^{th}$ digit of $s_{l'}$  be $s_{l'}(l)$. Then  $G_l$ is constructed in a way such that 
  we have $v_j \in N^{a_{s_{l'}(l)}}(v_i)$ in $G_l$.
  
  Note that there is a homomorphism  $f_l: G_l \rightarrow H_l$ for each $l \in \{1,2,...,s\}$ such that $H_l$ is 
  an $(m,n)$-colored mixed graph on $k$ vertices. 
  Now we claim that $f(v) = (f_0(v), f_1(v), ..., f_s(v))$ for each $v \in V(G)$ is an acyclic coloring of $G$. 
  
  For adjacent vertices $u,v$ in $G$ clearly we have $f(v) \neq f(u)$ as $f_0(v) \neq f_0(u)$. 
  Let $C$ be a cycle in $G$. We have to show that at least 3 colors have been used to color this cycle with respect to the coloring given by $f$. 
  Note that in $C$ there must be two incident edges $uv$ and $vw$ such that they belong to different forests, 
  say, $F_i$ and $F_{i'}$, respectively.
 Now suppose that $C$ received two colors with respect to $f$. Then we must have $f(u) = f(w) \neq f(v)$. In particular we must have 
 $f_0(u) = f_0(w) \neq f_0(v)$. 
 To have that we must also have $u,w \in N^{a_i}(v)$ for some $i \in \{0,1,...,p-1\}$ in $G_0$. 
 Let  $s_i$ and $s_{i'}$ differ in their $j^{th}$ digit. Then in $G_j$ we have $u \in N^{a_i'}(v)$ and $w \in N^{a_i''}(v)$
 for some $i' \neq i''$. Then we must have $f_j(u) \neq f_j(w)$.  Therefore, we also have $f(u) \neq f(w)$. Thus, the cycle $C$ cannot be colored with two colors under the coloring $f$. So $f$ is indeed an acyclic coloring of $G$. 
    \end{proof}

Thus, combining   Theorem~\ref{chromatic-arboricity} and~\ref{arboricity.chromatic-acyclic} we have 
$\chi_a(G) \leq k^{\lceil log_p \lceil log_{p}k +k/2 \rceil \rceil +1}$ for $\chi_{(m,n)}(G) = k$ where $p = 2m+n \geq 2$. 
However,  we managed to obtain the following better bound.

\begin{theorem}\label{chromatic-acyclic}
Let $G$ be an $(m,n)$-colored mixed graph with  $\chi_{(m,n)}(G) = k \geq 4$ where $p = 2m+n \geq 2$.  Then 
$\chi_a(G) \leq k^2 +k^{2+ \lceil log_2log_pk \rceil}$.
\end{theorem}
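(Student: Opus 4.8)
The plan is to refine the two-step route $\chi_{(m,n)}(G)=k \Rightarrow arb(G)\le r \Rightarrow \chi_a(G)$, but to cut the number of auxiliary colored homomorphisms from the $\lceil\log_p r\rceil$ of Theorem~\ref{arboricity.chromatic-acyclic} (which is essentially $\log_p k$, since Theorem~\ref{chromatic-arboricity} gives $r=arb(G)\le\lceil\log_p k+k/2\rceil\le k$ for $k\ge 4$) down to only $s=\lceil\log_2\log_p k\rceil$. I would keep the skeleton of Theorem~\ref{arboricity.chromatic-acyclic}: fix an ordering $v_1,\dots,v_t$ of $V(G)$, decompose $E(G)$ into $r$ forests $F_1,\dots,F_r$, and construct the acyclic coloring as a tuple $f=(f_0,f_1,\dots,f_s)$, where each $f_l$ is read off from a colored homomorphism of an auxiliary $(m,n)$-colored graph $G_l$ (with underlying graph $G$) into a target on $k$ vertices. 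The base coloring $f_0$ from $G_0$ (all edges oriented by the ordering with one fixed arc color) is proper, and together with one further fixed base coloring it contributes the factor $k^2$; the remaining $s$ homomorphisms contribute $k^s$, so the dominant term is $k^{2+s}=k^{2+\lceil\log_2\log_p k\rceil}$.

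The driving mechanism is the same as before: by Observation~\ref{special}, whenever two neighbours $u,w$ of a vertex $v$ are given \emph{different} arc-types toward $v$ in some $G_l$, we are forced to have $f_l(u)\neq f_l(w)$; hence a cycle $C$ cannot be $2$-colored as soon as, at one corner $uvw$ whose two edges lie in distinct forests, some level separates $u$ from $w$. So it is enough, at every vertex, to separate its incident forests, and by choosing the corner at the forest of \emph{minimum} index on $C$ one only needs an \emph{asymmetric} separation of a lower-indexed forest from a higher-indexed one. The naive implementation assigns each forest a fixed base-$p$ codeword of length $\log_p r$ and burns one homomorphism per digit; this is exactly the scheme I want to beat, since its separating power grows only geometrically in $s$.

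The improvement I would pursue is to make each new homomorphism expose an entire \emph{block} of digits rather than a single one, by letting the arc-assignment in $G_{l+1}$ depend on the colors already computed at the lower levels: the color-tuple $(f_0,\dots,f_l)$ of an endpoint acts as an address selecting which digit-position of the forest-index is encoded next. Because this tuple ranges over roughly $k^{\,l}$ values, the number of digit-positions addressable at the next level grows at least like squaring, so after $s$ levels one can separate any two forest-indices lying in a range of size $p^{2^{s}}$; demanding $2^{s}\ge\log_p r$, i.e.\ $s\ge\log_2\log_p r$, then yields $s=\lceil\log_2\log_p k\rceil$. The main obstacle is to make this addressed, recursive arc-assignment simultaneously (i) a legitimate $(m,n)$-colored graph on the underlying graph $G$, (ii) colored-homomorphic to a $k$-vertex target, and (iii) guarantee that the separation invariant holds at the minimum-forest corner of \emph{every} cycle, including the coupling between the "address" colors and the "content" they select; verifying that no $2$-colored cycle survives is the delicate part, and the additive $+\,k^2$ term is what I would use as a disjoint fallback palette to absorb the residual corners (e.g.\ when the address mechanism collides) so that the global acyclicity argument closes.
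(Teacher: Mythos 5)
Your proposal diverges from the paper's proof in its central mechanism, and that mechanism has a gap. The paper does \emph{not} improve the digit-encoding scheme of Theorem~\ref{arboricity.chromatic-acyclic} at all; it reuses that theorem verbatim. The double logarithm comes instead from a density-reduction step that your proposal is missing: one takes the maximum $t$ such that some subgraph $G'$ with $v_{G'}\geq k^2$ has $e_{G'}\geq t\cdot v_{G'}$, deletes the largest subgraph $G''$ with $e_{G''}>t\cdot v_{G''}$ (which by maximality of $t$ has fewer than $k^2$ vertices and accounts for the additive $k^2$ term, each of its vertices getting a private color), and then observes via the counting inequality~(\ref{eqn wlog}) that $\log_p k\geq t-1/2$ because $\binom{k}{2}/v_{G'}\leq 1/2$ when $v_{G'}\geq k^2$. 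Hence the remaining graph $G_0$ has arboricity $r=\lceil t+1/2\rceil\leq 1+\lceil\log_p k\rceil$ --- not the $r\approx k/2$ you start from --- and feeding this small $r$ into Theorem~\ref{arboricity.chromatic-acyclic} gives $s=\lceil\log_p r\rceil\leq 1+\lceil\log_p\log_p k\rceil$ directly. Your proposal instead keeps $r\approx k$ and tries to compress $\log_p r\approx\log_p k$ digit-levels down to $\log_2\log_p k$ by an adaptive addressing scheme.

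That compression is the step that does not hold up as described. In the separation argument, what must be distinguished at a corner $uvw$ of a cycle are the forest indices of the two edges $uv,vw$ meeting at $v$; the arc-type assigned in $G_{l+1}$ to an edge at $v$ can depend on the address $(f_0,\dots,f_l)$ of an endpoint, but for a \emph{fixed} vertex $v$ that address is a single fixed value, so each level exposes at most one digit-position of the forest index as seen from $v$. After $s$ levels at most $s$ positions are exposed at $v$, distinguishing at most $p^s$ forests there --- so you still need $s\gtrsim\log_p r$, and the claimed "squaring" of addressable positions conflates the number of addresses available globally with the number usable at any one vertex. The proposed fallback of absorbing "residual corners" into the extra $k^2$ palette also does not close this: a bad corner is a property of a cycle, not of a bounded set of vertices you can recolor, and no bound on the number of affected vertices is offered. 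To repair the proof, replace the adaptive encoding by the paper's dense-subgraph deletion; everything else (the tuple coloring $f=(f_0,\dots,f_s)$, the special-$2$-path separation at a corner whose edges lie in different forests) then goes through exactly as in Theorem~\ref{arboricity.chromatic-acyclic}.
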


\begin{proof}
Let $t$ be the maximum real number such that there exists a subgraph $G'$ of $G$ with $v_{G'} \geq k^2$ and $e_{G'} \geq t.v_{G'}$.
Let $G''$ be the biggest subgraph of $G$ with $e_{G''} > t.v_{G''}$. Thus, by maximality of $t$, $v_{G''} < k^2$.

Let $G_0 = G - G''$. Hence $\chi_a(G) \leq \chi_a(G_0) + k^2$.
By maximality of $G''$, for each subgraph $H$ of $G_0$, we have $e_{H} \leq t.v_{H}$.

If $t \leq \frac{v_{H} - 1}{2}$, then $e_{H} \leq (t + 1/2)(v_{H} -1)$. If $t > \frac{v_{H} - 1}{2}$, then $\frac{v_{H}}{2} < t + 1/2$. So $e_{H} \leq \frac{(v_{H} -1).v_{H}}{2} \leq (t + 1/2)(v_{H} -1)$. Therefore, $e_{H} \leq (t + 1/2)(v_{H} -1)$ for each subgraph $H$ of $G_0$.

By Nash-Williams' Theorem~\cite{nash1page}, there exists $r = \lceil t + 1/2 \rceil$ forests $F_1, F_2, \cdots, F_r$ which covers all the edges of $G_0$. We know from Theorem~\ref{arboricity.chromatic-acyclic} $\chi_a(G_0) \leq k^{s +1}$ where $s = \lceil log_pr \rceil$.

Using  inequality~(\ref{eqn wlog}) we get $log_p k \geq t - 1/2$. Therfore

$$s = \lceil log_p(\lceil t + 1/2 \rceil)\rceil \leq \lceil log_p(1 + \lceil log_pk \rceil)\rceil \leq 1 + \lceil log_plog_pk \rceil.$$

Hence $\chi_a(G) \leq k^2 +k^{2+ \lceil log_plog_pk \rceil}$.
\end{proof}

Our bound, when restricted to the case of $(m,n) = (1,0)$, slightly improves the existing bound~\cite{Kostochka97acyclicand}.

\section{On graphs with bounded maximum degree}
Recall that  $\mathcal{G}_{\Delta}$ is the family of graphs with maximum degree  $\Delta$. 
It is known that $\chi_{(1,0)}(\mathcal{G}_{\Delta}) \leq 2 \Delta^2 2^\Delta$~\cite{Kostochka97acyclicand}. 
Here we prove that $\chi_{(m,n)}(\mathcal{G}_{\Delta}) \leq 2 (\Delta-1)^p .p^{(\Delta-1)} + 2 $ for all $p = 2m+n \geq 2$ and $\Delta \geq 5$. 
Our result, restricted 
to the case $(m,n) = (1,0)$, slightly improves the upper bound of Kostochka, Sopena and Zhu~\cite{Kostochka97acyclicand}.

\begin{theorem}\label{chromatic-degree}
For the family $\mathcal{G}_{\Delta}$  of graphs with maximum degree  
$\Delta$ we have  $p^{\Delta/2} \leq \chi_{(m,n)}(\mathcal{G}_{\Delta}) \leq 2 (\Delta-1)^p. p^{(\Delta-1)} +2$ for all $p = 2m+n \geq 2$ and for all $\Delta \geq 5$.
\end{theorem}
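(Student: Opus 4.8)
The statement splits into a lower and an upper bound, which I would attack by completely different means.

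For the lower bound $p^{\Delta/2} \le \chi_{(m,n)}(\mathcal{G}_\Delta)$, the plan is a counting argument in the spirit of the proof of Theorem~\ref{chromatic-arboricity}. Fix a $\Delta$-regular undirected graph $R$ on $t$ vertices, so that $e_R = t\Delta/2$, and consider the $p^{e_R}$ distinct $(m,n)$-colored mixed graphs whose underlying graph is $R$. Suppose, for contradiction, that every one of them has $(m,n)$-colored mixed chromatic number at most $N$. Then each maps to some $(m,n)$-colored mixed graph on the labeled vertex set $\{1,\dots,N\}$, of which there are at most $(p+1)^{\binom{N}{2}}$. For a fixed target $H$ and a fixed vertex map $f\colon V(R)\to\{1,\dots,N\}$ (there are $N^{t}$ of these), the colors and orientations of $H$ read off along the images $f(u)f(v)$ determine the color or orientation of every edge $uv$ of $R$; hence the pair $(H,f)$ determines at most one colored version of $R$. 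Therefore the number of colored versions of $R$ that are $N$-colorable is at most $(p+1)^{\binom{N}{2}} N^{t}$. Comparing with the total $p^{t\Delta/2}=(p^{\Delta/2})^{t}$ and letting $t\to\infty$ with $N$ fixed, the factor $(p+1)^{\binom{N}{2}}$ is constant while $N^{t}$ is dominated by $(p^{\Delta/2})^{t}$ as soon as $N<p^{\Delta/2}$. Thus for $N=\lceil p^{\Delta/2}\rceil-1$ and $t$ large enough some colored version of $R$ fails to be $N$-colorable, exhibiting a graph in $\mathcal{G}_\Delta$ with chromatic number at least $p^{\Delta/2}$.

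For the upper bound the plan is the probabilistic method, building a single target that works for every $G\in\mathcal{G}_\Delta$ at once. I would take $H$ to be a random $(m,n)$-colored mixed graph on $N=2(\Delta-1)^{p}p^{\Delta-1}$ vertices whose underlying graph is complete, each edge receiving one of the $p$ arc/edge types independently and uniformly. The aim is to show that with positive probability $H$ enjoys a common-neighborhood property of the type $Q^{t,j}_{g(j)}$ from the Preliminaries, namely that every $j$-tuple $J$ and every $j$-vector $\vec{a}$ (for all $j$ up to about $\Delta-1$) satisfy $N^{\vec{a}}(J)\neq\emptyset$. For a fixed $(J,\vec{a})$ with $|J|=j$, a vertex lies in $N^{\vec{a}}(J)$ with probability $p^{-j}$, so the probability that $N^{\vec{a}}(J)=\emptyset$ is $(1-p^{-j})^{N-j}$; I would bound the number of pairs $(J,\vec{a})$ by $N^{j}p^{j}$, sum over $j$, and show that for the chosen $N$ this union bound (or a Lov\'asz-Local-Lemma refinement, exploiting that two bad events are independent unless their tuples share a vertex) stays below $1$. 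The dominant contribution comes from $j\approx\Delta-1$, and it is precisely the requirement that $N\,p^{-(\Delta-1)}$ exceed a constant multiple of $(\Delta-1)\log N$ that forces the factors $p^{\Delta-1}$ and $(\Delta-1)^{p}$, with the constant $2$ and the hypothesis $\Delta\ge 5$ absorbing the lower-order terms.

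Granting such an $H$, the second half of the argument is a greedy embedding. Ordering the vertices $v_1,\dots,v_t$ of an arbitrary $G\in\mathcal{G}_\Delta$, I would map them one at a time: when placing $v_i$, its already-placed neighbors impose that the image of $v_i$ lie in $N^{\vec{a}}(J)$, where $J$ lists their images and $\vec{a}$ records the required types. Since the underlying graph of $H$ is complete this is the only obstruction (distinctness of images on adjacent vertices is enforced automatically, as in Observation~\ref{special}), and property $Q$ guarantees a legal choice. The two extra vertices accounting for the ``$+2$'' would be reserved to handle the at most two vertices whose back-degree reaches $\Delta$---for instance the two endpoints of the last edge processed in a $\Delta$-regular block---so that every other vertex is placed against at most $\Delta-1$ constraints. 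The main obstacle I anticipate is exactly the calibration in the middle step: the property $Q$ must be strong enough (small enough $g$, wide enough range of $j$) that the greedy embedding never gets stuck, yet weak enough that a random $H$ on only $2(\Delta-1)^{p}p^{\Delta-1}+2$ vertices possesses it with positive probability. Pinning down the ordering so that the effective back-degree is $\Delta-1$ rather than $\Delta$, and verifying that the probabilistic estimate closes with the stated constants, is where the real work lies.
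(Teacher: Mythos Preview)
Your overall architecture matches the paper's: the lower bound is the same counting argument (the paper uses inequality~(\ref{eqn wolog}) applied to a $\Delta$-regular graph on $t$ vertices and lets $t\to\infty$), and the upper bound is likewise a probabilistic target plus a greedy embedding, with the ``$+2$'' handled by pulling out one edge so that the remaining graph is $(\Delta-1)$-degenerate (Theorem~\ref{chromatic-degree.degeneracy}) and then adjoining two fresh vertices for the endpoints of that edge.

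There is, however, a real gap in your upper-bound plan. Requiring only $N^{\vec a}(J)\neq\emptyset$ is not enough for the greedy embedding. Suppose $u$ and $w$ are non-adjacent in $G$ but share a later common neighbour $v$, and the edges $uv$, $wv$ carry different types. If at the time you place $w$ you happen to set $f(w)=f(u)$ (nothing you have stated prevents this, since $u$ and $w$ are not adjacent), then when you reach $v$ the tuple $J$ of images of its earlier neighbours is no longer a tuple of distinct vertices, and worse, no image of $v$ can satisfy both constraints through $f(u)=f(w)$ simultaneously. Your remark that ``distinctness of images on adjacent vertices is enforced automatically'' is correct but beside the point; it is distinctness of images on vertices sharing a \emph{future} common neighbour that must be maintained as an explicit invariant.

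This is exactly why the paper asks for the quantitatively stronger property $Q^{\Delta-1,j}_{1+(\Delta-j)(\Delta-2)}$ (Lemma~\ref{key-lemma}): if $v_{t+1}$ has $j$ earlier neighbours and $s=\Delta-j$ later neighbours, then the images already assigned to the (at most) $s(\Delta-2)$ earlier neighbours of those later neighbours must be avoided, so one needs $|N^{\vec a}(J)|\ge 1+s(\Delta-2)=1+(\Delta-j)(\Delta-2)$, not merely $\ge 1$. The probabilistic estimate has to be carried out against this larger threshold (which is what forces the shape $2(\Delta-1)^p p^{\Delta-1}$), and the greedy step must maintain the two-part invariant $(i)$--$(ii)$ in the proof of Theorem~\ref{chromatic-degree.degeneracy}. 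Once you strengthen $Q$ in this way, your outline becomes the paper's proof.
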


If every subgraph of a graph $G$ have at least one vertex with degree at most $d$, then $G$ is \textit{$d$-degenerated}.  
Minimum such $d$ is the \textit{degeneracy} of $G$. To prove the above theorem we need the following result.

\begin{theorem}\label{chromatic-degree.degeneracy}
Let $\mathcal{G}'_{\Delta}$  be the family of graphs with maximum degree  $\Delta$ and degeneracy $(\Delta - 1)$. Then 
  $\chi_{(m,n)}(\mathcal{G}'_{\Delta}) \leq 2 (\Delta-1)^p .p^{(\Delta-1)}$ for all $p = 2m+n \geq 2$ and for all $\Delta \geq 5$.
\end{theorem}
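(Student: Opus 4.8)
The plan is to prove the bound through a universal-target strategy. I would first reduce the theorem to producing a single $(m,n)$-colored mixed graph $T$ on $N = 2(\Delta-1)^p p^{\Delta-1}$ vertices enjoying property $Q^{\Delta-1,j}_{g}$ with $g > \Delta(\Delta-1)$ for every $j \in \{0,1,\dots,\Delta-1\}$; that is, every common colored neighborhood $N^{\vec{a}}(J)$ of at most $\Delta-1$ distinct vertices of $T$, taken over every direction/color pattern $\vec{a}$, should have more than $\Delta(\Delta-1)$ elements. Granting such a $T$, it suffices to show $G \to T$ for every $G \in \mathcal{G}'_{\Delta}$, which yields $\chi_{(m,n)}(G) \le N$.

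For the homomorphism I would fix a degeneracy ordering $v_1,\dots,v_{v_G}$, so that each $v_i$ has at most $\Delta-1$ \emph{back-neighbors} among $v_1,\dots,v_{i-1}$ (this is where degeneracy $\Delta-1$ enters), and build $f : V(G)\to V(T)$ greedily while keeping $f$ a colored homomorphism of the processed induced subgraph that is \emph{proper} on the auxiliary graph $G^*$ whose edges join the endpoints of special $2$-paths of $G$. The point of the auxiliary graph is this: if two back-neighbors $v_j, v_{j'}$ of $v_i$ forced incompatible arc/edge types, then (by a short case check on the definition of a special $2$-path) $v_j v_i v_{j'}$ is a special $2$-path, hence $\{v_j,v_{j'}\}$ is an edge of $G^*$; keeping $f$ proper on $G^*$ therefore guarantees that back-neighbors sharing an image always demand a common type, so the step-$i$ constraint collapses to membership in a single set $N^{\vec{b}}(Y)$ for a distinct tuple $Y$ with $|Y|\le\Delta-1$. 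By property $Q$ this set has size exceeding $\Delta(\Delta-1)$, and since $G^*$ has maximum degree at most $\Delta(\Delta-1)$ I can select $f(v_i)$ inside it while dodging the images of the already-placed $G^*$-neighbors of $v_i$. This preserves the invariant, the process never stalls, and Observation~\ref{special} confirms the output is a genuine colored homomorphism.

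It then remains to produce $T$, and here I would invoke the probabilistic method. Let $T$ be random on $N$ vertices, each unordered pair receiving, independently and uniformly, one of the $p = 2m+n$ possible arc/edge configurations. For fixed distinct $x_1,\dots,x_j$ and fixed $\vec{a}$, the quantity $|N^{\vec{a}}(J)|$ is then a sum of $N-j$ independent indicators of probability $p^{-j}$, with mean roughly $N p^{-j} \ge N p^{-(\Delta-1)} = 2(\Delta-1)^p$. A Chernoff lower-tail estimate bounds the chance that some such neighborhood sinks to $\Delta(\Delta-1)$ or below by a quantity exponentially small in $(\Delta-1)^p$, and a union bound over the at most $\sum_{j\le\Delta-1}(Np)^j$ choices of $(J,\vec{a})$ should drive the total failure probability below $1$ once $\Delta \ge 5$, so a surviving sample point is the required $T$.

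The hard part will be precisely this probabilistic estimate, which is genuinely tight exactly when $p = 2$ (the oriented and $2$-edge-colored cases, where the result is meant to sharpen Kostochka, Sopena and Zhu~\cite{Kostochka97acyclicand}): there the mean common-neighborhood size $2(\Delta-1)^2$ sits only a constant factor above the threshold $\Delta(\Delta-1)$ imposed by the injectivity bookkeeping, so the crude Chernoff-plus-union bound is in danger of not closing and one likely needs a sharper tail or second-moment analysis, a more economical way of preserving properness on $G^*$ that shrinks the forbidden set below $\Theta(\Delta^2)$, or a partially structured rather than fully random target. I expect the constant factor $2$ and the hypothesis $\Delta \ge 5$ to be exactly the slack consumed in forcing this estimate through for $p = 2$.
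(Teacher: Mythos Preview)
Your high-level plan---probabilistic target plus greedy extension along a degeneracy ordering---is exactly the paper's strategy, but the specific invariant you propose does not close when $p=2$. Demanding $|N^{\vec a}(J)|>\Delta(\Delta-1)$ for \emph{every} $j\le\Delta-1$ asks the most of the random target precisely where it is weakest: at $j=\Delta-1$ the mean of $|N^{\vec a}(J)|$ is about $Np^{-(\Delta-1)}=2(\Delta-1)^p$, which for $p=2$ equals $2(\Delta-1)^2$, only a factor $2$ above your threshold $\Delta(\Delta-1)$. A Chernoff tail then gives failure probability roughly $e^{-c(\Delta-1)^2}$ with $c<1/4$, while the union over all $(J,\vec a)$ with $|J|=\Delta-1$ contributes $(Np)^{\Delta-1}\approx e^{(\ln 2)(\Delta-1)^2}$; since $\ln 2>1/4$ the union bound blows up, and no sharper tail estimate will save a factor this large. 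So the ``hard part'' you flag is not merely tight---it actually fails with your bookkeeping.

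The missing idea is to let the required lower bound depend on $j$. The paper asks only for property $Q^{\Delta-1,j}_{g(j)}$ with $g(j)=1+(\Delta-j)(\Delta-2)$, and correspondingly strengthens the greedy invariant to: for every still-unprocessed vertex $w$, the already-processed neighbours of $w$ have pairwise distinct images. With this invariant, when you place $v_{t+1}$ its $j\le\Delta-1$ back-neighbours have distinct images (so the constraint is a genuine $j$-tuple), and to maintain the invariant you need only dodge, for each of the $s\le\Delta-j$ forward-neighbours $w$, the at most $\Delta-2$ \emph{other} processed back-neighbours of $w$; that is at most $s(\Delta-2)\le(\Delta-j)(\Delta-2)$ forbidden images. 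Thus $g(j)=1+(\Delta-j)(\Delta-2)$ suffices, and crucially $g(\Delta-1)=\Delta-1$ is tiny compared with the mean $2(\Delta-1)^p$, so the probabilistic lemma goes through for all $p\ge 2$ and $\Delta\ge 5$. Your $G^*$-properness is both weaker where it matters (it does not force back-neighbours of an unprocessed vertex to have distinct images) and costlier to maintain (forbidden set of order $\Delta^2$ regardless of $j$); replacing it with the paper's forward-looking invariant is what makes the trade-off between $j$ and the forbidden-set size balance.
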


To prove the above theorem we need the following lemma. 

\begin{lemma}\label{key-lemma}
There exists an  $(m,n)$-colored complete mixed graph  with property $Q^{t-1,j}_{1+(t-j)(t-2)}$ on $c = 2 (t-1)^p .p^{(t-1)}$ vertices
where $p = 2m+n \geq 2$ and $t \geq 5$. 
\end{lemma}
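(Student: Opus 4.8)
The plan is to construct the required graph by the probabilistic method, in the spirit of the tournament argument of Kostochka, Sopena and Zhu~\cite{Kostochka97acyclicand}. First I would fix the vertex set to be $\{1,2,\dots,c\}$ with $c = 2(t-1)^p p^{t-1}$, and take a \emph{random} $(m,n)$-colored complete mixed graph on it: independently for each unordered pair $\{x,y\}$, assign one of the $p = 2m+n$ possible relations (an arc in either direction carrying one of the $m$ arc-colors, or an edge carrying one of the $n$ edge-colors) uniformly at random. The goal is to show that with positive probability this random graph has property $Q^{t-1,j}_{1+(t-j)(t-2)}$; by definition this means that for every $j \in \{0,1,\dots,t-1\}$, every $j$-vector $\vec{a}$ and every $j$-tuple $J$ one has $|N^{\vec{a}}(J)| \geq g(j)$, where $g(j) = 1+(t-j)(t-2)$.

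The second step is to pin down the distribution of $|N^{\vec{a}}(J)|$ for a fixed level $j$, vector $\vec{a}=(a_1,\dots,a_j)$ and tuple $J=(v_1,\dots,v_j)$. For any candidate vertex $v \notin J$, the event $v \in N^{a_i}(v_i)$ depends only on the pair $\{v,v_i\}$; these $j$ pairs are distinct, so the events are independent and each has probability $1/p$, giving $\Pr[v \in N^{\vec{a}}(J)] = p^{-j}$. Moreover, distinct candidates $v$ involve disjoint pairs, so the corresponding indicators are independent. Hence $|N^{\vec{a}}(J)|$ is a binomial variable $\mathrm{Bin}(c-j,\,p^{-j})$ with mean $\mu_j = (c-j)p^{-j} \approx 2(t-1)^p\,p^{\,t-1-j}$.

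The third step is a union bound over all constraints. Since $N^{\vec{a}}(J)$ depends only on the unordered set of $j$ pairs (vertex, relation), there are at most $\binom{c}{j}p^{j}$ distinct constraints at level $j$. Using a Chernoff lower-tail estimate of the form $\Pr[\mathrm{Bin}(N,q)\le a] \le e^{-\mu}(e\mu/a)^{a}$ with threshold $a = g(j)-1 = (t-j)(t-2)$, it then suffices to verify
$$\sum_{j=0}^{t-1}\binom{c}{j}p^{j}\,e^{-\mu_j}\Bigl(\frac{e\,\mu_j}{(t-j)(t-2)}\Bigr)^{(t-j)(t-2)} < 1 .$$
Because $\mu_j$ decreases in $j$ while the combinatorial factor $\binom{c}{j}p^{j}$ increases, I expect the small- and intermediate-$j$ terms to be negligible (indeed the $j=0$ term vanishes, as $|N^{\vec{a}}(J)| = c \geq (t-1)^2 = g(0)$ deterministically), so the whole estimate reduces to controlling the single extreme term $j=t-1$.

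\textbf{The hard part} is precisely this top level $j=t-1$, where the mean is smallest, $\mu_{t-1}\approx 2(t-1)^p$, the required size is $g(t-1)=t-1$ (so $a=t-2$), and yet the union factor $\binom{c}{t-1}p^{t-1}$ is largest. Here one must show that the tail $e^{-2(t-1)^p}$, up to a polynomial-in-$t$ correction, beats $\binom{c}{t-1}p^{t-1}$, whose logarithm is of order $t\bigl(p\log(t-1)+(t-1)\log p\bigr)$. This is exactly where the leading constant $2$ in $c=2(t-1)^p p^{t-1}$, the specific shape $g(j)=1+(t-j)(t-2)$, and the hypotheses $p\ge 2$ and $t\ge 5$ are all spent: for $t\ge 6$ the inequality closes comfortably, while the boundary case $t=5$ (tightest when $p=2$) is essentially the equality case and dictates the value of the constant. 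I therefore expect the decisive computation to require the \emph{sharp} binomial tail at $j=t-1$ rather than a lossy Chernoff bound (and, if even that leaves no margin, a short repair by the alteration method — enlarging the vertex set slightly and deleting a vertex from each surviving bad constraint). Once the displayed sum is confirmed to be below $1$, some outcome of the random experiment satisfies $Q^{t-1,j}_{1+(t-j)(t-2)}$ simultaneously for all $j$, yielding the desired complete mixed graph.
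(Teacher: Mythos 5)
Your plan is the same as the paper's: color each pair of a $c$-element set uniformly and independently with one of the $p$ relations, observe that $|N^{\vec a}(J)|$ is distributed as $\mathrm{Bin}(c-j,p^{-j})$, take a union bound over the at most $\binom{c}{j}p^{j}$ essentially distinct constraints at each level $j$, and control each term by a lower-tail estimate. The gap is that you stop exactly where the proof has to be won: you never verify that your displayed sum is below $1$, and you explicitly hedge that the decisive term may need ``the sharp binomial tail'' or even ``a short repair by the alteration method.'' As written, this is a proof outline, not a proof.

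The reason this matters is that your diagnosis of where the danger sits is correct, and the danger is real. The paper's own summand $f(j)=2(p/2)^{j}e^{-cp^{-j}}c^{(t-j)(t-2)+1+j}$ satisfies $f(j+1)/f(j)>p/2\ge 1$, so the largest term of the union bound is indeed at $j=t-1$, as you predict. Now take the boundary case $t=5$, $p=2$: then $c=512$, the level-$4$ variable is $\mathrm{Bin}(508,1/16)$ with mean $\approx 31.75$, the bad event is $|N^{\vec a}(J)|\le 3$, whose probability is about $4\times 10^{-11}$ even computed exactly; multiplied by the $\binom{512}{4}\cdot 2^{4}\approx 4.5\times 10^{10}$ constraints at that level, the expected number of bad events is about $1.8>1$ (and about $17$ with the Chernoff form $e^{-\mu}(e\mu/a)^{a}$ you quote). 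So the plain first-moment argument does not close at these parameters, and your fallback (alteration, or a larger constant in $c$, or restricting to $t\ge 6$, where the level-$(t-1)$ term does drop below $1$) is not optional but necessary. Be aware also that you cannot simply defer to the paper for the missing computation: its write-up bounds the increasing sum $\sum_{j}f(j)$ by a geometric series anchored at the \emph{smallest} term $f(0)$, which does not follow from the ratio inequality $f(j+1)/f(j)>p/2$ that it proves. Completing your argument honestly requires either carrying out the alteration step you mention or adjusting the stated value of $c$ (or the range of $t$).
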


\begin{proof}
Let $C$ be a random $(m,n)$-colored  mixed graph with underlying complete graph. Let $u,v$ be two vertices of $C$ and the events $u \in N^{a}(v)$
for $a \in \{\pm \alpha_1, \pm \alpha_2, ...,  \pm \alpha_m,  \beta_1, \beta_2, ..., \beta_n\}$ are equiprobable and independent  with probability $\frac{1}{2m+n} = \frac{1}{p}$. 
We will show that the probability of $C$ not having property $Q^{t-1,j}_{1+(t-j)(t-2)}$ is strictly less than 1 when 
$|C| = c = 2 (t-1)^p .p^{(t-1)}$. Let $P(J,\vec{a})$ denote the probability of the event $|N^{\vec{a}}(J)| < 1+(t-j)(t-2)$ where $J$ is a $j$-tuple of $C$ and $\vec{a}$ is a $j$-vector for some $j \in \{0,1,...,t-1\}$. Call such an event a \textit{bad event}. Thus,

\begin{equation} \label{eq1}
\begin{split}
P(J, \vec{a}) & = \sum\limits_{i=0}^{(t - j)(t - 2)} {c-j \choose i} p^{-ij} (1 - p^{-j})^{c - i - j} \\
 & < (1 - p^{-j})^c \sum\limits_{i=0}^{(t - j)(t - 2)} \frac{c^i}{i!} (1 - p^{-j})^{- i - j} p^{-ij} \\
 & < 2 e^{-cp^{-j}} \sum\limits_{i=0}^{(t - j)(t - 2)} c^i \\
 & < e^{-cp^{-j}} c^{(t - j)(t - 2) + 1}.
\end{split}
\end{equation}

Let $P(B)$ denote the probability of the occurrence of at least one bad event. 
To prove this lemma it is enough to  show that $P(B) < 1$. Let $T^j$ denote the set of all $j$-tuples and $W^j$ denote the set of all $j$-vectors.  Then 

\begin{equation} \label{eq2}
\begin{split}
P(B) = \sum_{j=0}^{t - 1} \sum_{J \in T^j} \sum_{\vec{a} \in W^j} P(J, \vec{a}) & < \sum\limits_{j=0}^{t - 1} {c \choose j} p^j e^{-cp^{-j}} c^{(t - j)(t - 2) + 1} \\
 & < \sum\limits_{j=0}^{t - 1} \frac{c^j}{j!} p^j e^{-cp^{-j}} c^{(t - j)(t - 2) + 1} \\
 & = 2 \sum\limits_{j=0}^{t - 1} \frac{p^j}{2^j} \frac{2^{j-1}}{j!}  c^j e^{-cp^{-j}} c^{(t - j)(t - 2) + 1} \\
 & < 2 \sum\limits_{j=0}^{t - 1} \frac{p^j}{2^j} e^{-cp^{-j}} c^{(t - j)(t - 2) + 1 + j}.
\end{split}
\end{equation}

Consider the function $f(j) = 2 (p/2)^j e^{-cp^{-j}} c^{(t - j)(t - 2) + 1 + j}$. Observe that $f(j)$ is the $j^{th}$ 
summand  of the last sum from equation~(\ref{eq2}). 
Now
\begin{equation} \label{eq3}
\begin{split}
 \frac{f(j + 1)}{f(j)} & = \frac{p}{2} \frac{e^{(p-1)cp^{-j-1}}}{c^{t-3}} \\
 & >  \frac{p}{2} \frac{e^{(p-1)cp^{-(t-1)}}}{c^{t-3}} \\
 & >  \frac{p}{2} \left(\frac{e^{2 (p-1) (t-1)^{p-1}}}{c}\right)^{t-3} \\
\end{split}
\end{equation}

As $\frac{p - 1}{p} > \frac{1}{2}$,  $$\frac{(k-1)^{p-1}}{2} > \ln (k-1) \implies (p-1)(k-1)^{p-1} > \ln (k-1)^p.$$ 

Furthermore,  $$\frac{(p-1)}{\ln p} (k-1)^{p-1} > \frac{\ln 2}{\ln p} + (k-1) \implies (p-1) (k-1)^{p-1} > \ln(2 p^{k-1}).$$
 
 Adding the above two inequalities we get 
 $$e^{2 (p-1) (t-1)^{p-1}} > 2 (t-1)^p p^{t-1} = c.$$

Hence $\frac{f(j + 1)}{f(j)} > \frac{p}{2}$. Thus, using inequality~(\ref{eq2}) we get 
$P(B)  < \sum\limits_{j=0}^{t - 1} f(j)$.  This implies
 
\medskip 
 
\[P(B) < \begin{cases}
\frac{(p/2)^t - 1}{(p/2) - 1}f(0), & \text{  if } p > 2 \\
t f(0), & \text{  if } p = 2
\end{cases}
\]

\medskip

\textbf{Case.1:} $p > 2$.

\begin{equation} \label{eq4}
\begin{split}
 P(B) & < 2. \frac{(p/2)^t - 1}{(p/2) - 1}. \frac{c^{(t - 1)^2}}{e^{2(t - 1)^p p^{t - 1}}} \\
 & < 4. \frac{(p/2)^t - 1}{p - 2}. \left(\frac{c}{e^{2 p^{t - 1}}}\right)^{(t - 1)^p} \\
 & <  4. (p/2)^t. \left(\frac{c}{e^{2 p^{t - 1}}}\right)^{(t - 1)^p} \\
 & < \left(\frac{pc}{e^{2 p^{t - 1}}}\right)^{(t - 1)^p}
\end{split}
\end{equation}

Now, we observe that

\begin{equation*} \label{eq5}
\begin{split}
 \ln (pc) & < \ln p + \ln 2 + p\ln (t - 1) + (t - 1)\ln p \\
 & = t \ln p + p\ln (t - 1) + \ln 2 \\
 & <  tp + p(t - 1) + 2 \\
 & < 2tp < 2p^{t - 1} 
\end{split}
\end{equation*}

So from the inequality (\ref{eq4}), we can say that $P(B) < 1$ for $p >2$.

\medskip

\textbf{Case.2:} $p = 2$.

\begin{equation} \label{eq6}
\begin{split}
 P(B) & < 2t. \frac{c^{(t - 1)^2}}{e^{(t - 1)^2 2^t}} \\
 & = 2t. \left(\frac{c}{e^{2^t}}\right)^{(t - 1)^2} \\
 & < \left(\frac{2tc}{e^{2^t}}\right)^{(t - 1)^2}
\end{split}
\end{equation}

Observe that, $\ln c = 2 \ln (t-1) + t \ln 2 < 2(t - 1) + 2t = 4t - 2$.

Now, we see that 
\[
 \ln (2tc) < 4t - 2 + 2t < 6t < 2^t \\
 \implies 2tc < e^{2^t} \\
 \implies \frac{2tc}{e^{2^t}} < 1
\]

So from the inequality (\ref{eq6}), we can say that $P(B) < 1$ for $p = 2$.
\end{proof}

Now we are ready to prove Theorem~\ref{chromatic-degree.degeneracy}.

\medskip

\noindent \textit{Proof of Theorem~\ref{chromatic-degree.degeneracy}.} Suppose that $G$ is an $(m,n)$-colored mixed graph with maximum degree $\Delta$ and degeneracy 
$(\Delta-1)$.
By Lemma~\ref{key-lemma} we know that there exists an $(m,n)$-colored mixed graph $C$ with property 
$Q^{\Delta-1,j}_{1+(\Delta-j)(\Delta-2)}$ on $2 (\Delta-1)^p .p^{(\Delta-1)}$ vertices
where $p = 2m+n \geq 2$ and $\Delta \geq 5$. We will show that $G$ admits a homomorphism to $C$.

As $G$ has degeneracy $(\Delta-1)$, we can provide an ordering $v_1, v_2, ..., v_k$ of the vertices of $G$ in such a way that each vertex $v_j$ has at most $(\Delta-1)$ neighbors with lower indices.
Let $G_l$ be the  $(m,n)$-colored mixed graph induced by the vertices $v_1, v_2, ..., v_l$ from $G$ for $l \in \{1,2,...,k\}$. 
Now we will recursively construct a  homomorphism $f: G \rightarrow C$ with the following properties:

\begin{itemize}
\item[$(i)$] The partial mapping $f(v_1), f(v_2), ..., f(v_l)$ is a homomorphism of  $G_l$ to $C$ for all $l \in \{1,2,...,k\}$.

\item[$(ii)$]  For each $i > l$, all the neighbors of $v_i$ with indices less than or equal to $l$ has different images with respect to the mapping $f$.  
\end{itemize} 

Note that the base case is trivial, that is, any partial mapping $f(v_1)$ is enough. 
Suppose that the function $f$ satisfies the above properties for all $j \leq t$ where $t \in \{1,2,...,k-1\}$ is  fixed. 
Now assume that $v_{t+1}$ has $s$ neighbors with indices greater than $t+1$. 
Then $v_{t+1}$ has at most $(\Delta - s)$ neighbors with  indices less than $t+1$. 
Let $A$ be the set of neighbors of $v_{t+1}$   with  indices greater than $t+1$.
 Let $B$ be the set of vertices   with indices at most $t$ and with at least one neighbor in $A$. 
 Note that as each vertex of $A$ is a neighbor 
 of $v_{t+1}$ and has at most $\Delta-1$ neighbors with lesser indices, $|B| = (\Delta-2)|A| = s(\Delta-2)$. 
 Let $D$ be the set of possible  options for $f(v_{t+1})$ such that the partial mapping 
 is a homomorphism of $G_{t+1}$ to $C$. 
 As $C$ has property $Q^{\Delta-1,j}_{1+(\Delta-j)(\Delta-2)}$ we have $|C| \geq 1+s(\Delta-1)$. 
 So the set $D \setminus B$ is non-empty. 
 Thus, choose any vertex from $D \setminus B$ as the image $f(v_{t+1})$. 
 Note that this partial mapping satisfies the required conditions.  \hfill $ \square $

\medskip

Finally, we are ready to prove Theorem~\ref{chromatic-degree}.

\medskip

\noindent \textit{Proof of Theorem~\ref{chromatic-degree}.}  First we will prove the lower bound. 
Let $G_t$ be a $\Delta$ regular graph on $t$ vertices. Thus, $G_t$ has $\frac{t \Delta}{2}$  edges. Then 
we have $$k_t = \chi_{(m,n)}(G_t)  \geq \frac{p^{\Delta /2}}{p^{\left. {k_t \choose 2} \middle/  t \right.}}$$ 
using inequality~(\ref{eqn wolog}) (see Section~\ref{sec acyclic}). If $\chi_{(m,n)}(G_t) \geq p^{\Delta /2} $ for some $t$, then we are done. 
Otherwise, $\chi_{(m,n)}(G_t) = k_t$ is bounded. In that case, if $t$ is sufficiently large, then $\chi_{(m,n)}(G_t) \geq p^{\Delta /2}$ as 
$\chi_{(m,n)}(G_t)$ is a positive integer.

\medskip

Let $G = (V, A \cup E)$ be a connected $(m,n)$-colored mixed graph with maximum degree $\Delta \geq 5$ and $p = 2m+n \geq 2$. 
If $G$ has a vertex of degree at most $(\Delta-1)$ then it has degeneracy at most $(\Delta-1)$. In that case  by Theorem~\ref{chromatic-degree}
we are done. 

Otherwise, $G$ is $\Delta$ regular. In that case, remove an edge $uv$ of $G$ to obtain the graph $G'$. Note that $G'$ has 
maximum degree at most $\Delta$  and has degeneracy at most $(\Delta-1)$. Therefore, by Theorem~\ref{chromatic-degree} there exists an 
$(m,n)$-colored complete mixed graph $C$ on 
$2 (\Delta-1)^p .p^{(\Delta-1)}$ vertices to which $G'$ admits a $f$ homomorphism to. 
Let $G''$ be the graph obtained by deleting the vertices $u$ and $v$ of $G'$. Note that the homomorphism $f$ restricted to $G''$ is 
 a homomorphism $f_{res}$ of $G''$ to $C$. Now include two new vertices $u'$ and $v'$ to $C$ and obtain a new graph $C'$. 
Color the edges or arcs between the vertices of $C$ and $\{u',v'\}$ in such a way so that we can extend the homomorphism $f_{res}$ to a homomorphism
$f_{ext}$ 
of $G$ to $C'$  where $f_{ext}(u) = u'$, $f_{ext}(v) = v'$ and $f_{ext}(x) = f_{res}(x)$ for all $x \in V(G) \setminus \{u,v\}$. 
It is easy to note that the above mentioned process is possible. 

Thus, every connected $(m,n)$-colored mixed graph with maximum degree $\Delta$ admits a homomorphism to $C'$. 
\hfill $ \square$

\bibliographystyle{abbrv}
\bibliography{NSS14}

\begin{thebibliography}{1}

\bibitem{Marshall-edgecoloring}
N.~Alon and T.~H. Marshall.
\newblock Homomorphisms of edge-colored graphs and {C}oxeter groups.
\newblock {\em Journal of Algebraic Combinatorics}, 8(1):5--13, 1998.

\bibitem{Borodinacyclic}
O.~V. Borodin.
\newblock On acyclic colorings of planar graphs.
\newblock {\em Discrete Mathematics}, 25(3):211--236, 1979.

\bibitem{Kostochka97acyclicand}
A.~V. Kostochka, {\'E}.~Sopena, and X.~Zhu.
\newblock Acyclic and oriented chromatic numbers of graphs.
\newblock {\em Journal of Graph Theory}, 24:331--340, 1997.

\bibitem{nash1page}
C.~S.~J. Nash-Williams.
\newblock Decomposition of finite graphs into forests.
\newblock {\em Journal of the London Mathematical Society}, 1(1):12--12, 1964.

\bibitem{raspaud_and_nesetril}
J.~Ne{\v{s}}et{\v{r}}il and A.~Raspaud.
\newblock Colored homomorphisms of colored mixed graphs.
\newblock {\em Journal of Combinatorial Theory, Series B}, 80(1):147--155,
  2000.

\bibitem{Ochem_negativeresults}
P.~Ochem.
\newblock Negative results on acyclic improper colorings.
\newblock In {\em European Conference on Combinatorics (EuroComb 2005)}, pages
  357--362, 2005.

\bibitem{planar80}
A.~Raspaud and {\'E}.~Sopena.
\newblock Good and semi-strong colorings of oriented planar graphs.
\newblock {\em Information Processing Letters}, 51(4):171--174, 1994.

\bibitem{sopena_updated_survey}
E.~Sopena.
\newblock Homomorphisms and colourings of oriented graphs: An updated survey.
\newblock {\em Discrete Mathematics}, 2015 (in press).

\end{thebibliography}

\end{document}